\newtheorem{remark}{Remark}
\newtheorem{theorem}{Theorem}
\newtheorem{proof}{Proof}
\begin{document}
\title{On Performance of Fluid Antenna System using Maximum Ratio Combining}

\author{Xiazhi Lai, Tuo Wu, Junteng Yao, Cunhua Pan, \emph{Senior Member, IEEE},\\
Maged Elkashlan, \emph{Senior Member, IEEE}, and Kai-Kit Wong, \emph{Fellow, IEEE}
\vspace{-5mm}

\thanks{\emph{(Corresponding author: Tuo Wu and Cunhua Pan.)}}
\thanks{X. Lai is with the School of Computer Science, Guangdong University of Education, Guangzhou, Guangdong, China (E-mail: xzlai@outlook.com).}
\thanks{T. Wu and M. Elkashlan are with the School of Electronic Engineering and Computer Science at Queen
Mary University of London, London E1 4NS, U.K. (Email:\{tuo.wu, maged.elkashlan\}@qmul.ac.uk).}
\thanks{J. Yao is with the Faculty of Electrical Engineering and Computer Science, Ningbo University, Ningbo 315211, China (E-mail: juntengyao512@163.com).}
\thanks{C. Pan is with the National Mobile Communications Research Laboratory, Southeast University, Nanjing 210096, China. (e-mail: cpan@seu.edu.cn).}
\thanks{K. K. Wong is with the Department of Electronic and Electrical Engineering, University College London, WC1E 6BT London, U.K., and also with the Yonsei Frontier Laboratory and the School of Integrated Technology, Yonsei University, Seoul 03722, South Korea (e-mail: kat-kit.wong@ucl.ac.uk).}

}

\markboth{}
{Lai \MakeLowercase{\textit{et al.}}: }

\maketitle

\thispagestyle{empty}

\begin{abstract}
This letter investigates a fluid antenna system (FAS) where multiple ports can be activated for signal combining for enhanced receiver performance. Given $M$ ports at the FAS, the best $K$ ports out of the $M$ available ports are selected before maximum ratio combining (MRC) is used to combine the received signals from the selected ports. The aim of this letter is to study the achievable performance of FAS when more than one ports can be activated. We do so by analyzing the outage probability of this setup in Rayleigh fading channels through the utilization of Gauss-Chebyshev integration, lower bound estimation, and high signal-to-noise ratio (SNR) asymptotic approximations. Our analytical results demonstrate that FAS can harness rich spatial diversity, which is confirmed by computer simulations.
\end{abstract}

\begin{IEEEkeywords}
Diversity, fluid antenna system (FAS), maximum ratio combining (MRC), outage probability.
\end{IEEEkeywords}

\section{Introduction}
Fluid antenna system (FAS) capitalizes upon the inherent spatial diversity by dynamically adjusting the antenna elements to optimal positions, referred to as ``ports". This new paradigm stands in contrast to traditional communication methodologies, in which the antenna elements remain in fixed positions, as elucidated by Shojaeifard {\em et al.}~in \cite{Shojaeifard}. The realization of FAS may come in the forms of liquid-metal-based antennas \cite{Huang21} or on-off pixel-based antennas \cite{Rodrigo14}. See \cite{Wong-frontiers22} for more details.

Motivated by the great potential of FAS, recent research has delved into the FAS channel model, deriving the probability density function (PDF) of the received signal-to-noise ratio (SNR) as well as the corresponding outage probability \cite{FAS20,FAS21,FAS22}. Remarkably, the outcomes of their investigation unveiled the superiority of the FAS scheme over conventional fixed-position antenna systems, particularly when a considerable multitude of ports is at disposal. Machine learning techniques have also been shown to be effective in port selection for FAS \cite{Chai22}. Most recently, Wong \emph{et al.}~has extended the use of FAS for multiple access by taking advantage of the ups and downs of fading channels in the spatial domain, and illustrated the possibility of alternative multiple access schemes using FAS \cite{FAMS,FAMS23,Waqar23}.

However, research in FAS is still in an early stage and the majority of the results so far are limited to FAS with only one selected port exhibiting the maximum SNR \cite{FAS20,FAS21,FAS22,Chai22,FAMS,FAMS23}. The fact that a mobile terminal can actually afford more than one radio frequency (RF) chains, means that it is increasingly probable that FAS can come with multiple activated ports, with better performance \cite{MFAS23}. Since maximum ratio combining  (MRC) is the optimal mixing scheme without interference, it is therefore of great importance to understand the achievable performance of FAS using MRC if more than one ports can be selected for reception. This is the aim of this letter.

Specifically, our contributions are summarized as follows:
\begin{itemize}
\item First, we consider a $K$-port FAS which corresponds to a FAS with $K$ selected ports, operating in Rayleigh fading channels. The mobile receiver selectively activates $K$ optimal ports from the available $M$ ports. Then MRC is employed to combine the $K$ branches of signals from the activated ports. We derive the outage probability of the proposed $K$-port FAS using both Laplace transform (LT) and Gauss-Chebyshev integration methods.
\item Additionally, we present the lower bound and asymptotic expressions for the outage probability.
%\item The analysis indicates that the outage probability of the system decreases to zero as the total number of ports approaches infinity. Besides, the asymptotic analysis of the outage probability demonstrates that the proposed system can efficiently harness the diversity inherent in the total available $M$ ports, regardless of the number of activated ports $K$.
\item The simulation results substantiate the effectiveness of the proposed analytical approach, thereby confirming and validating our insights and discussions.
\end{itemize}

\section{System Model}
Consider an end-to-end communication in Rayleigh fading channels, where the source transmits the signal using a conventional fixed-position antenna with transmit power $P_S$ but the receiver is equipped with a FAS with $K$ fluid antenna elements.\footnote{In our idealized mathematical model, a FAS with multiple single-activated-port fluid antennas is equivalent to a FAS with multiple activated ports although their specific implementation details will differ.} Each antenna element is connected to one RF chain. Within this particular FAS configuration, a linear space of $W\lambda$ encompasses a total of $M$ ports, where $\lambda$ represents the wavelength \cite{FAS20}. Among these $M$ ports, it is assumed that each port is evenly distributed, and $K$ ports can be activated for signal receiving out of the total $M$ ports.

Since each port is placed closely, the channel parameters of each port are correlated. Building upon the channel model developed in \cite{FAS22} and \cite{FAMS23},  we introduce a virtual reference port to model the channel correlation. This virtual reference port is characterized by a channel parameter $h_0\sim\mathcal{CN}(0,\alpha)$, following a complex Gaussian distribution with zero mean and variance $\alpha$. Accordingly, the SNR of $h_0$ can be written as
 \begin{align}\label{ee1}
\gamma_{0}=\frac{P_S|h_{0}|^2}{\sigma^2},
\end{align}
where $\sigma^2$ denotes the noise power level. Considering $h_0$ as a complex Gaussian random variable (RV),  the PDF of $\gamma_{0}$ can be expressed as
\begin{align}\label{e7}
f_{\gamma_{0}}(x)=\frac{1}{\phi}e^{-\frac{x}{\phi}},
\end{align}
where $\phi=P_S\alpha/\sigma^2$ represents the average received SNR. Now, we proceed to establish the channel parameter linking the source and the $m$-th port, denoted as $h_{m}$, where $m\in\mathcal{M}=\{1,2,\dots,M\}$. The expression for $h_{m}$ takes the form
\begin{align}\label{e1}
h_{m}=\mu h_{0}+(1-\mu) e_{m},
\end{align}
where  $e_{m}\sim\mathcal{CN}(0,\alpha)$ for $m\in\mathcal{M}$ are independently and identically distributed (i.i.d.) RVs, $\alpha$ is the average channel gain from the source to the ports. Additionally, $\mu$ denotes the correlation factor, which is given by \cite{FAS22}
\begin{align}\label{e2}
\mu=&\sqrt{2}\sqrt{{}_1F_2\Big(\frac{1}{2};1;\frac{3}{2};-\pi^2W^2\Big)-\frac{J_1(2\pi W)}{2\pi W}},
\end{align}
where ${}_a F_b$ denotes the generalized hypergeometric function and $J_1(\cdot)$ is the first-order Bessel function of the first kind.

Conditioned on a fixed channel parameter $h_{0}$, and in accordance with $\gamma_{0}$, the corresponding SNR of $h_{m}$, expressed as $\gamma_{m}=\frac{P_S|h_m|^2}{\sigma^2}$, follows a non-central chi-square distribution. The conditional PDF can be expressed as
\begin{align}\label{e3}
f_{\gamma_{m} | \gamma_{0}=x_0}(x)=&\omega
e^{-\omega(x+\mu x_{0})}
I_{0}\big(2\omega\sqrt{\mu x_{0} x}\big),
\end{align}
where  $\omega=\big(\phi(1-\mu)\big)^{-1}$. Besides, $I_{0}(u)$ is the modified Bessel function of the first kind with order $0$, which  can be expressed in series representation as \cite{book}
\begin{align}\label{e4}
I_0(z)=\sum_{k=0}^{\infty}\frac{z^{2k}}{2^{2k} k!\Gamma(k+1)}.
\end{align}
Combining \eqref{e3} with \eqref{e4}, we further derive $f_{\gamma_{m} | \gamma_{0}=x_0}(x)$ as
\begin{align}\label{e5}
f_{\gamma_{m} | \gamma_{0}=x_0}(x)=&
\sum_{k=0}^{\infty}c_k x_0^k e^{-\omega\mu x_{0}} x^k e^{-\omega x},
\end{align}
where
\begin{align}\label{e6}
c_k=\frac{\omega^{2k+1}\mu^{k}}{(k!)^2}.
\end{align}

In order to receive the signal transmitted from the source, the receiver selects the $K$ ports with the $K$ highest received SNR from the available total of $M$ ports for activation.  The set of selected ports is denoted by
\begin{align}
\mathbb{K}=\arg \emph{\mbox{K}}\max_{m\in \mathcal{M}}\gamma_m,
\end{align}
where $\emph{\mbox{K}}\max_{m\in \mathcal{M}} \gamma_m$ denotes to select the $K$ maximal $\gamma_m$ out of set $\mathcal{M}$. In addition, to process the received signals from different antenna elements, the MRC technique is utilized to combine the $K$ branches of signals.

Moreover, the channel state information (CSI) is assumed to be not available at the source; hence the transmission data rate is fixed to $R$. Therefore, the outage of communication occurs when the FAS cannot sustain the data rate $R$, i.e.,
\begin{align}\label{e8}
\mathcal{E}=\left\{ \log_2\left(1+\sum_{m\in \mathbb{K}}\gamma_m\right)\leq R\right\}.
\end{align}
Thus, the system's outage probability is written as
\begin{align}\label{e9}
P_{\rm{out}}=\Pr\left(\mathcal{E}\right).
\end{align}

\section{Performance Analysis}
Here, we derive the exact outage probability of the proposed FAS-enabled communications. Subsequently, the lower bound and asymptotic expressions of the outage provability of system are derived. These derivations offer valuable insights for the proposed FAS-enabled communications system.

\subsection{Exact Outage Probability}
Consider the port with the $(K+1)$-th maximal channel gain, denoted as $v$. Given $\gamma_0=x_0$, the outage probability is expressed as
\begin{align}\label{q1}
\Lambda(z)&\hspace{.5mm}=\Pr\left(\sum_{m\in\mathbb{K}}\gamma_{m}\leq z | \gamma_{0}=x_0 \right)\notag\\
&\overset{(a)}{=}\binom{M}{K}(T+1)\int_{0}^{\infty}\Phi(z)\Psi(v)f_{v|\gamma_0=x_0}(v)dv,
\end{align}
where $z=2^R-1$ denotes the SNR threshold of outage, $T=M-K-1$, and
\begin{align}\label{q2}
\Psi(v,x_0)=\Pr\left(\gamma_{m}\leq v, m\in\mathcal{T}| \gamma_0=x_0 \right),
\end{align}
is the probability that $T+1$ ports are idle with maximal channel gain $v$, and $\mathcal{T}=\{1,2,\dots, T\}$. Also,
\begin{align}\label{q3}
\Phi(z,v,x_0)=\Pr\left(\sum_{m\in\mathcal{K}}\gamma_{m}\leq z,\gamma_m>v| \gamma_{0}=x_0 \right),
\end{align}
is the probability that $K$ ports are selected and outage occurs and $\mathcal{K}=\{1,2,\dots, K\}$. Step ($a$) holds since $\gamma_m$ for $m\in\mathcal{M}$ are i.i.d.~RVs, and $\Psi(v,x_0)\Phi(z,v,x_0)$ represents  the outage probability related to one of the port selection results.

In the following, we derive the expressions of $\Psi(v,x_0)$ and $\Phi(z,v,x_0)$. Then we obtain the outage probability by taking the expectation of $\Lambda(z)$ with respect to $\gamma_0$.

First, it is important to note that $\forall m,l \in\cal{T}$, $\gamma_{m}$ and $\gamma_{l}$ are independent with each other given $\gamma_{0}=x_0$. Furthermore, in accordance with \eqref{e3}, the joint PDF of $\gamma_{m}$ for $m\in\cal{T}$ can be expressed as
\begin{multline}\label{q4}
f_{\gamma_{m}, m\in\mathcal{T }| \gamma_{0}=x_0}(x_1,\dots, x_T)\\
=\prod_{m=1}^{T}\omega
e^{-\omega(x_m+\mu x_0)}
I_{0}\big(2\omega\sqrt{\mu x_0 x_m}\big).
\end{multline}
Then, by utilizing \eqref{q2} and \eqref{q4}, we evaluate  $\Psi(v,x_0)$ as
\begin{align}\label{q5}
\Psi(v,x_0)&=\int^v_{0} \cdots \int^v_{0} f_{\gamma_{m}, m\in\mathcal{T }| \gamma_{0}=x_0}( x_1,\dots, x_T)dx_1\cdots dx_T \notag\\
&=\Big(1-Q_1\big(\sqrt{2\omega\mu x_0},\sqrt{2\omega v}\big)\Big)^T,
\end{align}
where $Q_1(\cdot,\cdot)$ is the first order Marcum-$Q$ function \cite{FAS21}.

Next, we proceed to derive the analytical expression of $\Phi(z,v,x_0)$ by utilizing the following theorem.

\begin{theorem}
The LT expressions of the following functions
\begin{align}\label{q61}
g(x)&=x^a e^{-bx}u(x-v),\\\label{q62}
p(x)&=(x-a)^{K-1}e^{-bx}u(x-a),
\end{align}
are, respectively,
\begin{align}\label{q7}
L[g(x);s]&=e^{-(s+b)v}\sum_{l=0}^{a}\frac{a!v^l}{l!(s+b)^{a+1-l}},\\
L[p(x);s]&=\frac{(K-1)!e^{-a(s+b)}}{(s+b)^K},
\end{align}
where $\mathrm{Re}(s) \geq -b$, $\rm{Re}(x)$ denotes the real part of $x$, and $u(\cdot)$ is the step function.
\end{theorem}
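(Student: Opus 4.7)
The plan is to verify both identities by direct evaluation of the defining Laplace integrals, using an affine change of variable to absorb the step function into the lower limit and then invoking elementary Gamma integrals.

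For \eqref{q61}, I would begin by writing $L[g(x);s]=\int_v^\infty x^a e^{-(s+b)x}\,dx$ and substituting $y=x-v$ to pull out a factor of $e^{-(s+b)v}$, leaving $\int_0^\infty (y+v)^a e^{-(s+b)y}\,dy$. Expanding $(y+v)^a$ by the binomial theorem (valid since $a$ is a non-negative integer in the intended application) turns this into a finite sum of integrals of the form $\int_0^\infty y^{a-l} e^{-(s+b)y}\,dy = (a-l)!/(s+b)^{a-l+1}$. Recombining the binomial coefficient $\binom{a}{l}$ with the factor $(a-l)!$ cancels the $(a-l)!$ and leaves $a!/l!$, which assembles into exactly the summation appearing in \eqref{q7}.

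For \eqref{q62}, the same idea is even cleaner: the substitution $y=x-a$ in $L[p(x);s]=\int_a^\infty (x-a)^{K-1}e^{-(s+b)x}\,dx$ factors out $e^{-(s+b)a}$ and reduces the problem to the single Gamma integral $\int_0^\infty y^{K-1}e^{-(s+b)y}\,dy=(K-1)!/(s+b)^K$, giving the stated closed form directly.

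The only technical point worth remarking on is convergence: both integrals converge absolutely whenever $\mathrm{Re}(s+b)>0$, so the natural region of validity is the open half-plane $\mathrm{Re}(s)>-b$ rather than the closed one written in the statement. Since the formulas will subsequently be applied in the strict interior of this region (and indeed often with $s$ real and positive), I don't anticipate the boundary case to be an obstacle. Overall no step is genuinely difficult; the lemma is bookkeeping around the binomial theorem and the Gamma function, and the main work is just keeping the indexing consistent with the form written in \eqref{q7}.
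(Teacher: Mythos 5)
Your proposal is correct and follows essentially the same route as the paper: both are direct evaluations of the defining integrals, with your shift-and-binomial-expansion for $\int_v^\infty x^a e^{-(s+b)x}\,dx$ being a cosmetic variant of the paper's repeated integration by parts, and your treatment of $p(x)$ identical to the paper's substitution $t=(x-a)(s+b)$. Your side remark that convergence actually requires the open half-plane $\mathrm{Re}(s)>-b$ (rather than the closed one stated) is a fair and correct observation, though immaterial to how the result is used.
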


\begin{proof}
See Appendix A.
\end{proof}

From Theorem 1 and \eqref{e5}, the LT of the PDF of  $ \gamma_{m}$ with $ \gamma_{m}>v$ is given by
\begin{multline}\label{q8}
L\big[f_{\gamma_m|\gamma_0=x_0}(x_m);s\big]\\
= e^{-(s+\omega)v-\omega\mu x_0}\sum_{m=0}^{\infty}\sum_{l=0}^{m}\frac{d_m x_0^m v^l}{l!(s+\omega)^{m+1-l}},
\end{multline}
where $\rm{Re}(s)\geq -\omega$ and $d_m=c_m m!$.

Then, by using the faltung theorem in \cite{book},
the LT of the PDF of RV $\bar{\gamma}=\sum_{m=1}^K\gamma_{m}$ conditioned on $\gamma_m>v$ can be derived as
\begin{align}\label{q10}
&L\big[f_{\bar{\gamma}|\gamma_0=x_0}(x);s\big]\notag\\
&=\Big(L\big[f_{\gamma_{m}|\gamma_0=x_0}(x_m);s\big]\Big)^K    \notag\\%\label{q11}
&=e^{-Kv(s+\omega)-K\omega\mu x_0}
\sum_{r_m=0\atop m\in\mathcal{K}}^{\infty}\rho_m x_0^{\eta_m}
\sum_{l_m=0\atop m\in\mathcal{K}}^{r_m}\frac{v^{\epsilon_m}q_m}{(s+\omega)^{\chi_m}},
\end{align}
where
\begin{equation}\label{q12}
\left\{\begin{aligned}
\rho_m&=\prod_{m=1}^{K}d_m,\\
\eta_m&=\sum_{m=1}^{K}r_m,\\
\epsilon_m&=\sum_{m=1}^K l_m,\\
q_m=&\prod_{m=1}^K\frac{1}{l_m!},\\
\chi_m&=K+\eta_m-\epsilon_m.
\end{aligned}\right.
\end{equation}

Utilizing Theorem 1, we can obtain the PDF of $\bar{\gamma}$ conditioned on $\gamma_0=x_0$ as
\begin{multline}\label{q13}
f_{\bar{\gamma}|\gamma_0=x_0}(x)=e^{-\omega(x+K\mu x_0)}
\sum_{r_m=0\atop m\in\mathcal{K}}^{\infty}\rho_m x_0^{\eta_m}\\
\times\sum_{l_m=0\atop m\in\mathcal{K}}^{r_m}v^{\epsilon_m}q_m\frac{\big(x-Kv\big)^{\chi_m-1}}{(\chi_m-1)!},
\end{multline}
with $x\geq Kv$. Based on \eqref{q13}, the computation of $\Phi(z,v,x_0)$ can be performed by
\begin{align}\label{q14}
\Phi(z,v,x_0)&=\int_{Kv}^{z}    f_{\bar{\gamma}|\gamma_0=x_0}(x)dx\notag\\
&=e^{-\omega(Kv+K\mu x_0)}
\sum_{r_m=0\atop m\in\mathcal{K}}^{\infty}\rho_m x_0^{\eta_m}\notag\\
&\quad\times\sum_{l_m=0\atop m\in\mathcal{K}}^{r_m}v^{\epsilon_m}q_m
\frac{\gamma\big(\chi_m,\omega(z-Kv)\big)}{(\chi_m-1)!\omega^{\chi_m}},
\end{align}
in which $z\geq Kv$ is a necessary condition; otherwise, $\Phi(z,v,x_0)=0$. In addition, $\gamma(\alpha,x)$ is the lower incomplete Gamma function, which can be expressed in integral and serial representations respectively, as
\begin{align}\label{q15}
\gamma(\kappa,x)=\int_0^{x}e^{-t}t^{\kappa-1}dt=(\kappa-1)!\left(1-e^{-x}\sum_{m=0}^{\kappa-1}\frac{x^m}{m!}\right).
\end{align}
Calculating $\Lambda(z)$ in \eqref{q1} with \eqref{q5} and \eqref{q14}, and then taking the expectation of $\Lambda(z)$ with respect to $\gamma_0$, the outage probability of the system can be computed as
\begin{multline}\label{q16}
P_{\mathrm{out}}=\int_{0}^{\infty}\int_{0}^{\frac{z}{K}}\binom{M}{K}(T+1)\Phi(z,v,x_0)\Psi(v,x_0)\\
\times f_{\gamma_m|\gamma_0=x_0}(v)f_{\gamma_0}(x_0)dvdx_0.
\end{multline}

\begin{remark}
From \eqref{q5}, it becomes evident that $\Psi(v,x_0)$ becomes tiny with a large number of $T$, owing to the fact that $Q_1(\cdot,\cdot)$ is bounded by 1 \cite{FAS21}.  This observation implies that $P_{\mathrm{out}}$ in \eqref{q16}, i.e., the outage probability of the system approaches zero when the total number of ports $M\rightarrow\infty$.\footnote{Note that the conclusion may vary depending on how spatial correlation over the ports is modelled. That said, the analysis presented in this letter gives the first-look performance of FAS using MRC.}
\end{remark}

It is noticeable that the integral in \eqref{q16} presents computational challenges. To address this, we initially replace the upper limit of the integral in \eqref{q16} with a sufficiently large value denoted as $H$.  This approximation is valid because the integrand in \eqref{q16} tends to approach zero as $x_0$ increases. Subsequently, we resort to  the Gauss-Chebyshev integral to derive a precise approximation of $P_{\mathrm{out}}$ in serial representation:
\begin{multline}\label{q17}
P_{\mathrm{out}}
\approx \binom{M}{K}\frac{\pi^2 H z(T+1)}{4U_pU_l}
\sum_{p=1}^{U_p}
\sum_{l=1}^{U_l}
\Phi(z,y_l,y_p)\Psi(y_l,y_p)\\
\times \sqrt{1-t_p^2}\sqrt{1-t_l^2}
f_{\gamma_m|\gamma_0=y_p}(y_l)f_{\gamma_0}(y_p),
\end{multline}
where $U_p$ and $U_l$ are complexity-accuracy tradeoff parameters, and
\begin{equation}\label{q18}
\left\{\begin{aligned}
t_p&=\cos\left(\frac{(2p-1)\pi}{2U_p}\right),\\
y_p&=\frac{H (t_p+1)}{2},\\
t_l&=\cos\left(\frac{(2l-1)\pi}{2U_l}\right),\\
y_l&=\frac{z (t_l+1)}{2K}.
\end{aligned}\right.
\end{equation}
According to \cite{NumericalAnalysis}, it is established that the approximation provided in \eqref{q17} is tight with large numbers of $U_p$ and $U_l$.

\subsection{Lower Bound and Asymptotic Analysis}
For the sake of facilitating computation and analysis of $P_{\rm{out}}$,  we derive a lower bound for $P_{\rm{out}}$ in this subsection. Notably, this lower bound closely approximates the exact outage probability, particularly in the high SNR region. Moreover, we analyze the asymptotic behavior of $P_{\rm{out}}$ and discuss the performance bottleneck of the system.

First, from \eqref{e5}, we can readily know that $f_{\gamma_{m} | \gamma_{0}=x_0}(x)$ is lower-bounded by
\begin{align}\label{b1}
\bar{f}_{\gamma_{m} | \gamma_{0}=x_0}(x)=
\omega
e^{-\omega\mu x_{0}}  e^{-\omega x}.
\end{align}
Based on \eqref{b1}, we can accordingly obtain the lower bound of $\Psi(v,x_0)$ and $\Phi(z,v,x_0)$, respectively, as
\begin{align}\label{b2}
\bar{\Psi}(v,x_0)
&=e^{-\omega T\mu x_{0}}
\sum_{t=0}^T \binom{T}{t}(-1)^t  e^{-\omega tv},
\\ \label{b3}
 \bar{\Phi}(z,v,x_0)
&=e^{-\omega(Kv+K\mu x_0)}
-e^{-\omega(z+K\mu x_0)}\notag\\
&\times\sum_{k=0}^{K-1}
\frac{\omega^k}{k!}\sum_{m=0}^k\binom{k}{m}z^{k-m}(-Kv)^{k}.
\end{align}
Applying \eqref{b1}--\eqref{b3} into \eqref{q16}, we can obtain
\begin{align}\label{b4}
P_{\rm{out}}&\geq\bar{P}_{\rm{out}}\notag\\
&=\binom{M}{K}\frac{T+1}{M\mu\omega\phi+1}\notag\\
&\times\left(\sum_{t=0}^T  \binom{T}{t}\beta_t
-
\sum_{t=0}^T\sum_{k=0}^{K-1}
\sum_{m=0}^k
\binom{T}{t}\binom{k}{m} \kappa_{t,k,m}
\right),
\end{align}
where
\begin{align}\label{b5}
\beta_t&=\frac{(-1)^t}{ (t+K+1)}\big(1-e^{-z\omega (t+K+1)}\big),\\
\kappa_{t,k,m}&=\frac{(-1)^{t+m}K^m (z\omega)^{k-m}
\gamma\left(m+1,\frac{z\omega(t+1)}{K}\right)
}{
k!(t+1)^{m+1}
}.
\end{align}
From \eqref{e5}, it becomes apparent that the exact value of $P_{\rm{out}}$ approaches the lower bound $\bar{P}_{\rm{out}}$ as  the average received SNR becomes large, i.e., the values of $\alpha$ or $P_S$ are large. Moreover, when the value of $M$ and $K$ are large, the calculations of outage probabilities in \eqref{q16} and \eqref{q17} become intricate. In contrast, the evaluation of $\bar{P}_{\rm{out}}$ using the expression in \eqref{b4} remains computationally efficient, aiding in the analysis of the  performance of the proposed system.

Furthermore, by applying the expansion $e^{-x}=1-x$ for tiny value of $|x|$, we can obtain the asymptotic expression of outage probability in the high SNR region as
\begin{align}\label{b6}
P_{\rm{out}} \simeq\psi(z\omega)^M,
\end{align}
where
\begin{multline}\label{b7}
\psi=\\
\binom{M}{K}\frac{(T+1)(1-\mu)}{K!(M\mu+1-\mu)K^{T+1}}
\sum_{k=0}^K \binom{K}{k}\frac{1}{k+T+1}.
\end{multline}

\begin{remark}
The asymptotic outage probability in \eqref{b6} indicates that the diversity order of the FAS-aided communication system is $M$. This means that the proposed system can fully exploit the diversity offered by  total available $M$ ports, regardless of the number of activated ports $K$. Therefore, enhancing the system's performance by increasing the number of $K$ ports is feasible; yet the improvement is less significant than the improvement of increasing $M$.
\end{remark}

\section{Numerical Results}
In this section, we present several numerical results for the FAS-aided communications. Following a similar approach to the work in \cite{FAMS}, we assume the value of $W=5$, which is a common choice for 5G networks in the context of handset devices. Moreover, we set the data rate $R$ to $5$ bit/s/Hz, leading to an outage SNR threshold $z$ set at 31. Unless specified otherwise, we refer to the outcomes of our simulations as ``Simul". Also, we denote the results obtained from \eqref{q17}, \eqref{b4}, and \eqref{b6} as ``Ana.", ``LB", and ``Asy.", respectively.

\begin{figure}[htbp]
	\centering
	\includegraphics[width=0.9\linewidth]{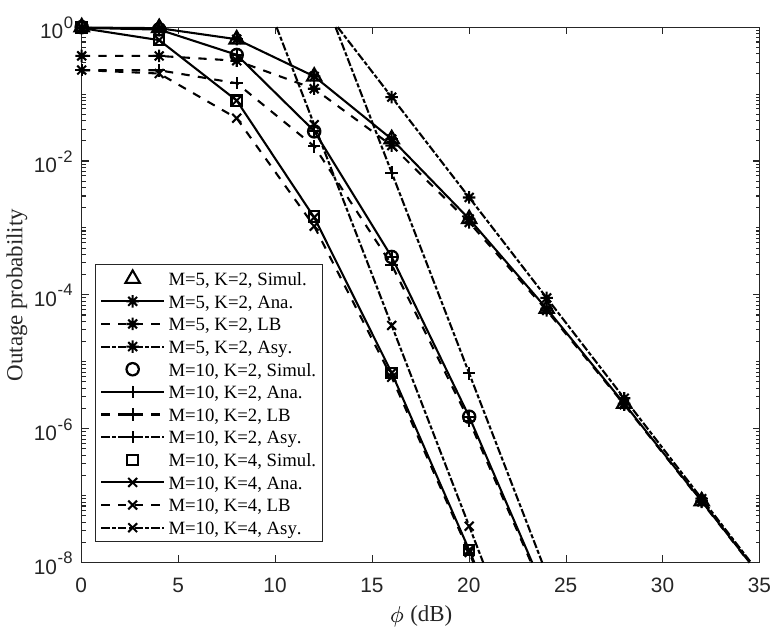}
	\caption{Outage probability versus average SNR $\phi$.}
	\label{fig1}
\end{figure}

Fig.~\ref{fig1}  illustrates the variations in outage probability with the average SNR ($\phi$), considering different values of $M$ and $K$. As observed from Fig.~\ref{fig1}, it is evident that the analytical outage probability derived from equation \eqref{q17} closely aligns with the simulation results.  Additionally, the lower bound provided by equation \eqref{b4} accurately approximates the simulation outcomes, particularly in the high SNR region, which corroborates with the asymptotic result in equation \eqref{b6}.

Furthermore, Fig.~\ref{fig1} indicates that the outage probability of the system is predominantly influenced by the total port number $M$, affirming the analysis in equation \eqref{b6} that the diversity stemming from all available ports can be maximally exploited. It is worth noting that the gain achieved by increasing $K$ from $2$ to $4$ in the high SNR region is approximately $3.8$, consistent with the findings presented in equation \eqref{b6}. However, the enhancement resulting from increasing the number of activated ports $K$ is comparatively less pronounced than the gains derived from increasing $M$.

\begin{figure}[htbp]
	\centering
	\includegraphics[width=0.9\linewidth]{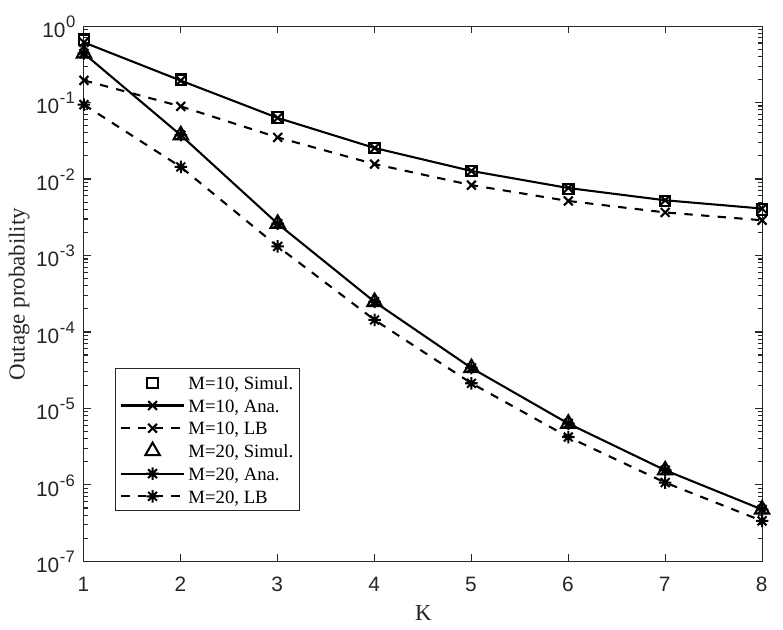}
	\caption{Outage probability versus number of antenna elements $K$.}
	\label{fig2}
\end{figure}

Fig.~\ref{fig2} provides a visualization of the relationship between the number of activated ports ($K$) and the resulting outage probability in the context of the $K$-port FAS-aided communications system. The experiment is conducted with $\phi$ set at $10$ dB, and two distinct values for the total port count ($M$), namely $10$ and $20$. Meanwhile, the number of activated ports $K$ is allowed to vary within the interval of $1$ to $8$.

Upon examining the results depicted in Fig.~\ref{fig2}, it becomes evident that increasing the count of activated ports ($K$) contributes significantly to enhancing the overall system's outage performance. However, the most striking insight emerges from the clear trend indicating that the advantages stemming from augmenting the total port count ($M$) are even more pronounced. This noteworthy pattern is in concordance with the analytical findings presented in the preceding sections.

\section{Conclusion}
In this letter, we proposed to analyze the FAS-aided communications system with multiple activated ports, where the MRC technique was utilized to combine the signal from different activated ports. The outage probability of the proposed system has been derived  in Rayleigh fading channels, in forms of exact expression, lower bound, and asymptotic expression. Analysis showed that the diversity order of the system equals the number of total available ports. Simulation results corroborated the effectiveness of the provided analysis.

\appendices
\section{Proof of Theorem 1}
According  to the definition of LT, we can compute the LT expression of  $g(x)$ in \eqref{q61} as
\begin{align}\label{q7}
L[g(x);s]&=\int_{0}^{\infty}g(x)e^{-sx}dx
=\int_{v}^{\infty}x^a e^{-(s+b)x}dx
\notag\\&
=e^{-(s+b)v}\sum_{l=0}^{a}\frac{a!v^l}{l!(s+b)^{a+1-l}},
\end{align}
where $\mathrm{Re}(s) \geq -b$, and the last step can be derived by using the partial integral technique.

Similarly, we can compute the LT expression of $p(x)$ as
\begin{align}\label{q7}
L[p(x);s]&\hspace{1mm}=\int_{0}^{\infty}p(x)e^{-sx}dx\notag\\
&\hspace{1mm}=\int_a^{\infty}(x-a)^{K-1}e^{-(b+s)x}dx\notag\\
&\overset{(e_1)}{=}\frac{e^{-a(s+b)}}{(s+b)^K}\int_0^{\infty}t^{K-1}e^{-t}dt\notag\\
&\overset{(e_2)}{=}\frac{(K-1)!e^{-a(s+b)}}{(s+b)^K},
\end{align}
where $\mathrm{Re}(s) \geq -b$, the step ($e_1$) can be obtained by setting $t=(x-a)(s+b)$, and step ($e_2$) uses the partial integral technique.


\begin{thebibliography}{1}
\bibitem{Shojaeifard}A. Shojaeifard et al., ``MIMO evolution beyond 5G through reconfigurable intelligent surfaces and fluid antenna systems," \emph{Proc. IEEE}, vol. 110, no. 9, pp. 1244--1265, Sept. 2022.

\bibitem{Huang21}Y. Huang, L. Xing, C. Song, S. Wang, and F. Elhouni, ``Liquid antennas: Past, present and future," \emph{IEEE Open J. Antennas Propag.}, vol. 2, pp. 473--487, 2021.

\bibitem{Rodrigo14}D. Rodrigo, B. A. Cetiner, and L. Jofre, ``Frequency, radiation pattern and polarization reconfigurable antenna using a parasitic pixel layer," \emph{IEEE Trans. Antennas Propag.}, vol. 62, no. 6, pp. 3422--3427, Jun. 2014.

\bibitem{Wong-frontiers22}
K. K. Wong, K. F. Tong, Y. Shen, Y. Chen, and Y. Zhang, ``Bruce Lee-inspired fluid antenna system: Six research topics and the potentials for 6G,'' {\em Frontiers Commun. Netw., section Wireless Commun.}, 3:853416, Mar. 2022.

\bibitem{FAS20}K.-K. Wong, A. Shojaeifard, K.-F. Tong and Y. Zhang, ``Performance limits of fluid antenna systems," \emph{IEEE Commun. Letters}, vol. 24, no. 11, pp. 2469--2472, Nov. 2020.

\bibitem{FAS21}K.-K. Wong, A. Shojaeifard, K.-F. Tong, and Y. Zhang, ``Fluid antenna systems," \emph{IEEE Trans. Wireless Commun.}, vol. 20, no. 3, pp. 1950--1962, March 2021.

\bibitem{FAS22}K.-K. Wong, K. F. Tong, Y. Chen, and Y. Zhang, ``Closed-form expressions for spatial correlation parameters for performance analysis of fluid antenna systems," \emph{IET Electron. Lett.}, vol. 58, no. 11, pp. 454--457, May 2022.

\bibitem{Chai22}Z. Chai, K.-K. Wong, K.-F. Tong, Y. Chen, and Y. Zhang, ``Port selection for fluid antenna systems," \emph{IEEE Commun. Lett.}, vol. 26, no. 5, pp. 1180--1184, May 2022.

\bibitem{FAMS}K.-K. Wong and K.-F. Tong, ``Fluid antenna multiple access," \emph{IEEE Trans. Wireless Commun.}, vol. 21, no. 7, pp. 480C4815, Jul. 2022.


\bibitem{FAMS23}K.-K. Wong, D. Morales-Jimenez, K.-F. Tong, and C.-B. Chae, ``Slow fluid antenna multiple access," \emph{ IEEE Trans. Commun.}, vol. 71, no. 5, pp. 2831--2846, May 2023.

\bibitem{Waqar23}N. Waqar, K.-K. Wong, K.-F. Tong, A. Sharples, and Y. Zhang, ``Deep learning enabled slow fluid antenna multiple access," \emph{IEEE Commun. Lett.}, vol. 27, no. 3, pp. 861--865, March 2023.



\bibitem{MFAS23}K.-K. Wong, K.-F. Tong, and C.-B. Chae, ``Fluid antenna system-part II: Research opportunities," \emph{IEEE Commun. Lett.}, vol. 27, no. 8, pp. 1924--1928, Aug. 2023

\bibitem{book}I. S. Gradshteyn and I. M. Ryzhik, \emph{Table of Integrals, Series, and Products, 7th ed}. San Diego, CA: Academic, 2007

\bibitem{NumericalAnalysis} E. Suli and D. F. Mayers, \emph{An Introduction to Numerical Ananlysis}. Cambridge, U.K.: Cambridge Univ. Press, 2003.

\end{thebibliography}
\end{document}